\documentclass[journal]{IEEEtran}

\usepackage[T1]{fontenc}
\usepackage[utf8]{inputenc}   

\usepackage{amsmath, amssymb, amsthm}

\usepackage{braket}           

\usepackage{graphicx}
\usepackage{tikz}
\usetikzlibrary{shapes.geometric, arrows.meta, positioning}

\usepackage{array}
\usepackage{booktabs}

\newtheorem{proposition}{Proposition}

\ifCLASSINFOpdf
\else
\fi

\begin{document}

\title{An Effective Pauli-Channel Model for Passive-User Loop-Back QKD}

\author{Luis~Adrián~Lizama-Pérez%
\thanks{Luis Adrián Lizama-Pérez is with the Departamento de Sistemas de Información y Comunicaciones,
Universidad Autónoma Metropolitana, Unidad Lerma, Lerma, Estado de México, Mexico
(e-mail: l.lizama@correo.ler.uam.mx).}}%

\markboth{IEEE Communications Letters}%
{Lizama-Pérez: An Effective Pauli-Channel Model for Passive-User Loop-Back QKD}

\maketitle

\begin{abstract}
This letter develops an effective channel model for distributed
passive-user Loop-Back quantum key distribution. 
In the intended key-establishment setting, the two passive users \(B_1\) and \(B_2\) are the legitimate lightweight parties that establish a shared secret key by using Alice as an active quantum preparation-and-measurement infrastructure.
A single active
station prepares and measures BB84 states, while two remote users
apply only local polarization rotations. We show that the passive-user
pair can be externally encapsulated as an effective Loop-Back node
whose observable action is a balanced mixture of conjugate-basis
dephasings, equivalently represented as an anisotropic Pauli channel
with identity, \(X\), and \(Z\) components and no \(Y\)-component.
This structure differs from isotropic depolarization and recovers the
ideal conclusive-event probability \(P_{\mathrm{conc}}=1/4\). The
model also clarifies why non-orthogonal intermediate states are
necessary for passive-user security. This channel-level description characterizes Alice's observable statistics and provides a compact basis for subsequent analysis of passive-user Loop-Back QKD under realistic optical and adversarial conditions.
\end{abstract}

\begin{IEEEkeywords}
Quantum key distribution, Loop-Back QKD, passive users, quantum channels, Pauli channels, polarization encoding, quantum communications.
\end{IEEEkeywords}

\IEEEpeerreviewmaketitle

\section{Introduction}

\IEEEPARstart{Q}{uantum} key distribution (QKD) provides a physical-layer mechanism for establishing secret correlations whose security does not rely on computational assumptions. Since BB84~\cite{bennett2014quantum}, most practical QKD architectures have followed a prepare-and-measure paradigm in which the communicating users require active quantum functionality, including state preparation, basis selection, synchronization, and, in many cases, single-photon detection. Although this model is well suited to laboratory demonstrations and backbone links, it remains demanding for lightweight, mobile, or densely deployed user nodes~\cite{scarani2009security}.

A natural strategy to reduce this burden is to concentrate the active quantum functionality at a single station while restricting the remote side to simpler optical operations. Related asymmetric and two-way architectures include plug-and-play systems~\cite{muller1997plug}, semi-quantum protocols~\cite{boyer2007quantum,boyer2009semiquantum}, and deterministic two-way schemes~\cite{lucamarini2005secure}. Loop-Back QKD follows this principle: an active node prepares a quantum state, a remote node modifies or reflects the optical signal, and the state is returned for measurement~\cite{lizama2025loop,lizama2025three}. The relevant information is then associated with the accumulated round-trip transformation rather than with a direct receiver-side measurement.

This letter considers a distributed passive-user version of this setting. A single active station, Alice, prepares and measures BB84 polarization states, whereas two remote users, \(B_1\) and \(B_2\), do not prepare or detect quantum states. Instead, each passive user applies a local polarization rotation to the same propagating quantum carrier. The operation observed by Alice is therefore the composed transformation
\begin{equation}
    R_{\mathrm{eff}} = R_2R_1 .
\end{equation}
For rotations selected from \(\{R(+\pi/8),R(-\pi/8)\}\), this effective operation either cancels to the identity or produces a rotation toward the conjugate basis. Hence, conclusive events arise from the observable action of the composed passive subsystem, while the individual operations remain internal to the passive users.

The main contribution of this letter is to show that this distributed passive-user mechanism admits a compact effective-channel representation. From Alice's viewpoint, the pair \(B^*=(B_1,B_2)\) can be encapsulated as a single effective Loop-Back node whose action is a balanced mixture of dephasings in conjugate bases,
\begin{equation}
    \mathcal{E}_{\mathrm{LB}}(\rho)
    =
    \frac{1}{2}D_Z(\rho)
    +
    \frac{1}{2}D_X(\rho).
\end{equation}
We show that this map is equivalently an anisotropic Pauli channel with identity, \(X\), and \(Z\) components, and no \(Y\)-component. The representation recovers the ideal conclusive-event probability and shows that the induced disturbance is not isotropic depolarization, but an intrinsic channel generated by the distributed transformation structure.

This channel-level description characterizes Alice's measurement statistics, but it is not a complete composable-security proof. Security of the passive-user implementation still depends on the absence of perfectly distinguishable intermediate states and on the resulting information--disturbance trade-off~\cite{wootters1982single,fuchs1996quantum}.

\section{Distributed Passive-User Loop-Back Model}

We consider a minimal distributed Loop-Back configuration consisting of one active station, Alice, and two optically passive users, \(B_1\) and \(B_2\). Alice is the only node equipped with quantum state preparation and measurement capabilities. The passive users do not generate, measure, or store quantum states; they only apply local polarization rotations to the same propagating optical carrier.

A single protocol round follows the closed optical path
\begin{equation}
    A \longrightarrow B_1 \longrightarrow B_2 \longrightarrow A .
\end{equation}
Alice prepares a BB84 polarization state
\begin{equation}
    \ket{\psi_{b,k}} \in 
    \{\ket{0_Z},\ket{1_Z},\ket{0_X},\ket{1_X}\},
\end{equation}
where \(b\in\{Z,X\}\) denotes the preparation basis and \(k\in\{0,1\}\) denotes the encoded state within that basis. The state is sent through the two passive users and then returned to Alice, who measures it in the same basis \(b\) used for preparation.

Each passive user \(B_i\), with \(i\in\{1,2\}\), independently selects one of two polarization rotations, represented as qubit rotations in the corresponding Bloch-sphere plane,
\begin{equation}
    R_i = R(\theta_i),
    \qquad
    \theta_i \in \left\{+\frac{\pi}{8},-\frac{\pi}{8}\right\}.
\end{equation}
The state returning to Alice is therefore
\begin{equation}
    \ket{\psi_{\mathrm{out}}}
    =
    R_2 R_1 \ket{\psi_{b,k}} .
\end{equation}
Since the rotations act in the same polarization plane, their composition is additive:
\begin{equation}
    R_2R_1 = R(\theta_1+\theta_2).
\end{equation}
Thus, the effective rotation angle satisfies
\begin{equation}
    \theta_{\mathrm{eff}}
    =
    \theta_1+\theta_2
    \in
    \left\{
        -\frac{\pi}{4},\,0,\,+\frac{\pi}{4}
    \right\}.
\end{equation}

There are two operational cases. If the users select opposite rotations, then
\begin{equation}
    \theta_1=-\theta_2,
    \qquad
    R_2R_1=R(0)=I,
\end{equation}
and the state ideally returns unchanged to Alice. These rounds are inconclusive because Alice obtains the state originally prepared and cannot identify a nontrivial composed transformation.

If the users select the same rotation, then
\begin{equation}
    \theta_1=\theta_2,
    \qquad
    R_2R_1=R\left(\pm\frac{\pi}{4}\right).
\end{equation}
In this case, the input state is rotated toward the conjugate basis. For example, for \(\ket{\psi_{b,k}}=\ket{0_Z}\),
\begin{equation}
    R\left(+\frac{\pi}{4}\right)\ket{0_Z}
    =
    \ket{0_X},
    \qquad
    R\left(-\frac{\pi}{4}\right)\ket{0_Z}
    =
    \ket{1_X}.
\end{equation}
A subsequent measurement by Alice in the original \(Z\) basis yields either \(\ket{0_Z}\) or \(\ket{1_Z}\) with equal probability. The observation of the state orthogonal to the one prepared is therefore a conclusive event, because it certifies that a nontrivial effective rotation was applied.

More generally, if Alice prepares \(\ket{\psi_{b,k}}\) and measures in the same basis \(b\), the conclusive-event rule is
\begin{equation}
    \mathrm{conclusive}
    \iff
    \Pi_m =
    \ket{\psi_{b,k}^{\perp}}\!\bra{\psi_{b,k}^{\perp}},
\end{equation}
where \(\Pi_m\) is the projector associated with Alice's measurement outcome \(m\), and \(\ket{\psi_{b,k}^{\perp}}\) denotes the state orthogonal to \(\ket{\psi_{b,k}}\) in basis \(b\). Under ideal balanced choices,
\begin{equation}
    \Pr(\theta_1=\theta_2)=\frac{1}{2},
\end{equation}
and, conditioned on this event,
\begin{equation}
    \Pr(m=\psi_{b,k}^{\perp}\mid \theta_1=\theta_2)=\frac{1}{2}.
\end{equation}
Therefore, the ideal conclusive-event probability is
\begin{equation}
    \Pr(\mathrm{conclusive})
    =
    \frac{1}{2}\cdot\frac{1}{2}
    =
    \frac{1}{4}.
    \label{eq:conclusive_probability}
\end{equation}

This probability is not introduced here as a final key rate. Rather, it characterizes the observable Loop-Back statistics produced by the distributed passive subsystem. In the following section, we show that these same statistics can be obtained by externally encapsulating \(B_1\) and \(B_2\) as a single effective Loop-Back node.

After the quantum transmission, Alice publicly announces only the
indices of the conclusive rounds, over an authenticated classical
channel. Ambiguous or inconclusive rounds are discarded. Alice does
not reveal her prepared state or measurement outcome for the retained
rounds. Hence, in the ideal case, the announcement only performs
sifting. For each retained round, the condition of conclusiveness
implies \(\theta_1=\theta_2\), so that each passive user can keep the
bit associated with its own local rotation as a raw shared-key bit.

Alice's conclusive announcement reveals only that a nontrivial
effective rotation occurred, equivalently \(\theta_1=\theta_2\), but
it does not reveal whether the common local choice was \(+\pi/8\) or
\(-\pi/8\). Hence, when the raw bit is assigned to the sign of the
local rotation, \(B_1\) and \(B_2\) share the same raw-key bit, whereas
Alice does not have access to its value.

\section{Effective Loop-Back Channel Observed by Alice}

The distributed passive-user subsystem can be externally represented as a single effective Loop-Back node. From Alice's viewpoint, the internal decomposition into \(B_1\) and \(B_2\) is not directly accessible; only the global transformation applied to the returning quantum state is observed. We therefore define the effective passive subsystem
\begin{equation}
    B^*=(B_1,B_2),
\end{equation}
whose action is determined by
\begin{equation}
    R_{\mathrm{eff}}
    =
    R_2R_1
    =
    R(\theta_1+\theta_2).
\end{equation}
As shown in Section~II, the effective angle satisfies
\begin{equation}
    \theta_{\mathrm{eff}}
    \in
    \left\{
        -\frac{\pi}{4},\,0,\,+\frac{\pi}{4}
    \right\}.
\end{equation}
The case \(\theta_{\mathrm{eff}}=0\) leaves Alice's prepared state unchanged, whereas the cases \(\theta_{\mathrm{eff}}=\pm\pi/4\) map it toward the conjugate basis. Hence, the effective subsystem behaves as a node that either preserves Alice's preparation basis or induces a conjugate-basis action.

To make this relation explicit, suppose that Alice prepares \(\ket{0_Z}\). In the identity case,
\begin{equation}
    R(0)\ket{0_Z}=\ket{0_Z}.
\end{equation}
In the nontrivial cases,
\begin{equation}
    R\left(+\frac{\pi}{4}\right)\ket{0_Z}
    =
    \ket{0_X},
    \qquad
    R\left(-\frac{\pi}{4}\right)\ket{0_Z}
    =
    \ket{1_X}.
\end{equation}
Averaging over the two equally likely signs gives
\begin{equation}
    \frac{1}{2}\ket{0_X}\!\bra{0_X}
    +
    \frac{1}{2}\ket{1_X}\!\bra{1_X}
    =
    \frac{I}{2}.
\end{equation}
Thus, conditioned on a nontrivial effective rotation, the state returned to Alice is statistically indistinguishable from a state completely dephased in the conjugate basis.

The same reasoning applies to any BB84 input state
\begin{equation}
    \rho_{b,k}
    =
    \ket{\psi_{b,k}}\!\bra{\psi_{b,k}},
    \qquad
    b\in\{Z,X\},
    \quad
    k\in\{0,1\}.
\end{equation}
Let \(D_b\) denote complete dephasing in Alice's preparation basis \(b\), and let \(D_{\bar b}\) denote complete dephasing in the conjugate basis \(\bar b\). Thus, if \(b=Z\), then \(D_b=D_Z\) and \(D_{\bar b}=D_X\), whereas if \(b=X\), then \(D_b=D_X\) and \(D_{\bar b}=D_Z\). Since \(\rho_{b,k}\) is an eigenstate of basis \(b\),
\begin{equation}
    D_b(\rho_{b,k})
    =
    \rho_{b,k}.
\end{equation}
By contrast, complete dephasing in the conjugate basis gives
\begin{equation}
    D_{\bar b}(\rho_{b,k})
    =
    \frac{1}{2}\rho_{b,k}
    +
    \frac{1}{2}\rho_{b,k}^{\perp},
\end{equation}
where
\begin{equation}
    \rho_{b,k}^{\perp}
    =
    \ket{\psi_{b,k}^{\perp}}\!\bra{\psi_{b,k}^{\perp}} .
\end{equation}

Under balanced passive-user choices, the effective Loop-Back action observed by Alice is therefore
\begin{equation}
    \mathcal{E}_{\mathrm{LB}}(\rho_{b,k})
    =
    \frac{1}{2}D_b(\rho_{b,k})
    +
    \frac{1}{2}D_{\bar b}(\rho_{b,k}).
\end{equation}
Substituting the two dephasing contributions yields
\begin{align}
    \mathcal{E}_{\mathrm{LB}}(\rho_{b,k})
    &=
    \frac{1}{2}\rho_{b,k}
    +
    \frac{1}{2}
    \left(
        \frac{1}{2}\rho_{b,k}
        +
        \frac{1}{2}\rho_{b,k}^{\perp}
    \right)  \nonumber\\
    &=
    \frac{3}{4}\rho_{b,k}
    +
    \frac{1}{4}\rho_{b,k}^{\perp}.
    \label{eq:effective_lb_on_bb84}
\end{align}
Consequently, when Alice measures in her original preparation basis, the probability that the outcome is associated with the orthogonal projector is
\begin{equation}
    \Pr\!\left(
    \Pi_m=\rho_{b,k}^{\perp}
    \right)
    =
    \frac{1}{4},
\end{equation}
which coincides with the conclusive-event probability obtained directly from the distributed transformation rule in \eqref{eq:conclusive_probability}.

Equation~\eqref{eq:effective_lb_on_bb84} shows that the distributed passive-user subsystem reproduces, at the level of Alice's observable statistics, the conclusive-event structure of an effective Loop-Back node. The distinction is internal: in the distributed implementation, the effective operation is not selected by a single remote party, but emerges from the private composition of two local transformations. Alice can identify whether a conclusive event occurred, but the internal decomposition of \(R_{\mathrm{eff}}\) remains inaccessible from her measurement record alone.

\section{Anisotropic Pauli-Channel Representation}

The effective Loop-Back channel derived in the previous section was obtained from Alice's prepare-and-measure statistics on BB84 states. We now introduce a Pauli-channel representative for the externally encapsulated Loop-Back box \(B^{\ast}=(B_1,B_2)\). This representative reproduces Alice's BB84 preparation-and-measurement statistics and provides a compact external model of the passive-user subsystem, without resolving the coherent internal sequence of transformations between \(B_1\) and \(B_2\). This representation makes explicit that the induced map is not an isotropic depolarizing channel, but a structured anisotropic Pauli channel associated with the two conjugate bases involved in the Loop-Back interaction.

Let
\begin{equation}
    \rho
    =
    \frac{1}{2}
    \left(
        I + r_x X + r_y Y + r_z Z
    \right)
\end{equation}
be an arbitrary qubit state, where \(X\), \(Y\), and \(Z\) are the Pauli matrices and
\begin{equation}
    \mathbf{r}=(r_x,r_y,r_z)
\end{equation}
is its Bloch vector. Complete dephasing in the \(Z\) basis removes the components transverse to the \(Z\) axis:
\begin{equation}
    D_Z(\rho)
    =
    \frac{1}{2}
    \left(
        I + r_z Z
    \right).
\end{equation}
Similarly, complete dephasing in the \(X\) basis removes the components transverse to the \(X\) axis:
\begin{equation}
    D_X(\rho)
    =
    \frac{1}{2}
    \left(
        I + r_x X
    \right).
\end{equation}

The effective Loop-Back operation observed by Alice is the balanced mixture
\begin{equation}
    \mathcal{E}_{\mathrm{LB}}(\rho)
    =
    \frac{1}{2}D_Z(\rho)
    +
    \frac{1}{2}D_X(\rho).
    \label{eq:lb_dephasing_mixture}
\end{equation}
Substituting the expressions of \(D_Z\) and \(D_X\) into \eqref{eq:lb_dephasing_mixture}, we obtain
\begin{align}
    \mathcal{E}_{\mathrm{LB}}(\rho)
    &=
    \frac{1}{2}
    \left[
        \frac{1}{2}
        \left(
            I+r_z Z
        \right)
    \right]
    +
    \frac{1}{2}
    \left[
        \frac{1}{2}
        \left(
            I+r_x X
        \right)
    \right] \nonumber\\
    &=
    \frac{1}{2}I
    +
    \frac{1}{4}r_x X
    +
    \frac{1}{4}r_z Z .
    \label{eq:lb_bloch_action}
\end{align}
Thus, in the Bloch representation, the effective Loop-Back channel acts as
\begin{equation}
    (r_x,r_y,r_z)
    \longmapsto
    \left(
        \frac{r_x}{2},\,0,\,\frac{r_z}{2}
    \right).
    \label{eq:lb_bloch_map}
\end{equation}
The \(Y\)-component is completely suppressed, whereas the \(X\)- and \(Z\)-components are contracted by a factor \(1/2\). This already shows that the channel is anisotropic.

\begin{proposition}
The balanced effective Loop-Back channel
\begin{equation}
    \mathcal{E}_{\mathrm{LB}}(\rho)
    =
    \frac{1}{2}D_Z(\rho)
    +
    \frac{1}{2}D_X(\rho)
\end{equation}
is equivalent to the anisotropic Pauli channel
\begin{equation}
    \mathcal{E}_{\mathrm{LB}}(\rho)
    =
    \frac{1}{2}\rho
    +
    \frac{1}{4}X\rho X
    +
    \frac{1}{4}Z\rho Z.
    \label{eq:lb_pauli_channel}
\end{equation}
\end{proposition}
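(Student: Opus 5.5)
Both maps are linear, so the plan is to show they agree on a basis of operators. Expanding an arbitrary qubit state as $\rho=\tfrac12(I+r_xX+r_yY+r_zZ)$, it suffices to compute the action of each side on $I$, $X$, $Y$, $Z$ separately and compare the resulting Bloch vectors. The left-hand side is already available: equation \eqref{eq:lb_bloch_action}, equivalently the Bloch map \eqref{eq:lb_bloch_map}, gives $\mathcal{E}_{\mathrm{LB}}(\rho)=\tfrac12 I+\tfrac14 r_xX+\tfrac14 r_zZ$.

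For the right-hand side I will use the Pauli conjugation rules, which follow from the anticommutation relations among distinct Pauli matrices and $P^2=I$. Conjugation by $X$ fixes $I$ and $X$ and flips the signs of $Y$ and $Z$, so
\begin{equation*}
X\rho X=\tfrac12(I+r_xX-r_yY-r_zZ).
\end{equation*}
Conjugation by $Z$ fixes $I$ and $Z$ and flips the signs of $X$ and $Y$, so
\begin{equation*}
Z\rho Z=\tfrac12(I-r_xX-r_yY+r_zZ).
\end{equation*}

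Next I form $\tfrac12\rho+\tfrac14X\rho X+\tfrac14Z\rho Z$ and collect coefficients of each Pauli operator, inside the overall factor $\tfrac12$:
\begin{itemize}
\item $I$: $\tfrac12+\tfrac14+\tfrac14=1$.
\item $X$: $r_x\left(\tfrac12+\tfrac14-\tfrac14\right)=\tfrac12 r_x$.
\item $Y$: $r_y\left(\tfrac12-\tfrac14-\tfrac14\right)=0$.
\item $Z$: $r_z\left(\tfrac12-\tfrac14+\tfrac14\right)=\tfrac12 r_z$.
\end{itemize}
The result is $\tfrac12 I+\tfrac14 r_xX+\tfrac14 r_zZ$, which matches \eqref{eq:lb_bloch_action} term by term. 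Since the two maps agree on every $\rho$, and by linearity on all operators, they are equal.

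As an alternative cross-check, one can start from the standard identities $D_Z(\rho)=\tfrac12(\rho+Z\rho Z)$ and $D_X(\rho)=\tfrac12(\rho+X\rho X)$, which reproduce the dephasing formulas stated above. Averaging them with weights $\tfrac12$ gives \eqref{eq:lb_pauli_channel} immediately. This is the cleanest route, and I would present it as the main argument, with the Bloch computation as verification.

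The only point requiring care is extending the identity from density matrices to the whole operator space, so that "equivalent" means equal as quantum channels. This follows because both sides are linear, and density matrices span all $2\times 2$ matrices. I would also remark that the Pauli weights $\left(\tfrac12,\tfrac14,0,\tfrac14\right)$ are nonnegative and sum to one, so the representation is a genuine Pauli channel with no $Y$-component.
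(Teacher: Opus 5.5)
Your proposal is correct. Your Bloch-coefficient computation is exactly the paper's proof: derive $X\rho X$ and $Z\rho Z$ from the conjugation rules, collect the coefficients of $I,X,Y,Z$, and match the result against the previously derived Bloch action.

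The route you would promote to the main argument is genuinely different. You write each complete dephasing as a Pauli twirl, $D_Z(\rho)=\frac{1}{2}(\rho+Z\rho Z)$ and $D_X(\rho)=\frac{1}{2}(\rho+X\rho X)$. The proposition then follows in one line, because a convex mixture of Pauli channels is a Pauli channel whose weights simply add: $p_I=\frac{1}{4}+\frac{1}{4}=\frac{1}{2}$ and $p_X=p_Z=\frac{1}{4}$. Likewise $p_Y=0$ because neither branch contains a $Y$ conjugation. This shows structurally where each weight comes from and why the $Y$ component is absent, which the coordinate calculation only confirms after the fact.

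The paper's route has a different advantage. It reuses the Bloch map $(r_x,r_y,r_z)\mapsto(r_x/2,0,r_z/2)$ derived just before, so the Pauli weights are tied directly to the anisotropic contraction the paper wants to emphasize.

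Both routes rest on the same elementary fact. Checking your twirl identities against the paper's Bloch definitions of $D_Z$ and $D_X$ uses the same Pauli conjugation rules. In either case, your remark that the equality extends from states to all operators by linearity is the right way to read ``equivalent.''
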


\begin{proof}
Consider the general Pauli channel
\begin{equation}
    \mathcal{P}(\rho)
    =
    p_I\rho
    +
    p_X X\rho X
    +
    p_Y Y\rho Y
    +
    p_Z Z\rho Z .
\end{equation}
Set
\begin{equation}
    p_I=\frac{1}{2},
    \qquad
    p_X=\frac{1}{4},
    \qquad
    p_Y=0,
    \qquad
    p_Z=\frac{1}{4}.
\end{equation}
These coefficients satisfy \(p_I+p_X+p_Y+p_Z=1\), and the map is completely positive and trace preserving because it is a convex mixture of unitary Pauli conjugations. Then
\begin{equation}
    \mathcal{P}(\rho)
    =
    \frac{1}{2}\rho
    +
    \frac{1}{4}X\rho X
    +
    \frac{1}{4}Z\rho Z .
    \label{eq:pauli_channel_form}
\end{equation}
Using
\begin{align}
    X\rho X
    &=
    \frac{1}{2}
    \left(
        I + r_x X - r_y Y - r_z Z
    \right),\\
    Z\rho Z
    &=
    \frac{1}{2}
    \left(
        I - r_x X - r_y Y + r_z Z
    \right),
\end{align}
and substituting into \eqref{eq:pauli_channel_form}, we obtain
\begin{align}
    \mathcal{P}(\rho)
    &=
    \frac{1}{2}
    \left[
        \frac{1}{2}
        \left(
            I+r_xX+r_yY+r_zZ
        \right)
    \right] \nonumber\\
    &\quad+
    \frac{1}{4}
    \left[
        \frac{1}{2}
        \left(
            I+r_xX-r_yY-r_zZ
        \right)
    \right] \nonumber\\
    &\quad+
    \frac{1}{4}
    \left[
        \frac{1}{2}
        \left(
            I-r_xX-r_yY+r_zZ
        \right)
    \right] \nonumber\\
    &=
    \frac{1}{2}I
    +
    \frac{1}{4}r_xX
    +
    \frac{1}{4}r_zZ.
\end{align}
This coincides with the Bloch action in \eqref{eq:lb_bloch_action}. Therefore,
\begin{equation}
    \mathcal{P}(\rho)
    =
    \mathcal{E}_{\mathrm{LB}}(\rho),
\end{equation}
which proves the claim.
\end{proof}

Equation~\eqref{eq:lb_pauli_channel} provides a compact channel-level description of the passive-user Loop-Back operation. The identity component represents the part of the process compatible with Alice's preparation basis, whereas the \(X\) and \(Z\) components encode basis-dependent flips associated with the conjugate-basis branch. The absence of a \(Y\)-component reflects the fact that, in the ideal model, the effective process does not combine simultaneous bit- and phase-flip behavior.

This distinction is operationally relevant. In an isotropic depolarizing channel, the three Pauli components appear symmetrically. Here, by contrast, the channel preserves the special role of the \(X\) and \(Z\) bases used by the Loop-Back interaction. The effective noise is therefore not an external generic disturbance, but the intrinsic channel induced by the distributed passive transformation structure.

\section{Secret-Key-Rate Interpretation and Intermediate-State Security}

The Pauli-channel representation in \eqref{eq:lb_pauli_channel} explains why the ideal conclusive-event probability is
\begin{equation}
    P_{\mathrm{conc}}=\frac{1}{4}
\end{equation}
under balanced passive-user choices and ideal optical operation. This probability is not, by itself, a secret-key rate; it is the sifting factor associated with the effective Loop-Back inference rule.

A conservative rate interpretation can be written in the generic form
\begin{equation}
    R_{\mathrm{key}}
    =
    P_{\mathrm{conc}}
    \left[
        1
        -
        f_{\mathrm{EC}} h(e_{\mathrm{obs}})
        -
        \Delta_{\mathrm{sec}}(e_{\mathrm{obs}})
    \right],
    \label{eq:key_rate_interpretation}
\end{equation}
where \(h(\cdot)\) is the binary entropy function, \(e_{\mathrm{obs}}\)
is the error rate inferred from the statistics of accepted conclusive
events observed by Alice, \(f_{\mathrm{EC}}\geq 1\) accounts for
error-correction inefficiency, and \(\Delta_{\mathrm{sec}}\) denotes
the privacy-amplification penalty associated with adversarial
information leakage. In the ideal noiseless case, \(e_{\mathrm{obs}}=0\),
and \(P_{\mathrm{conc}}=1/4\) gives the maximum raw conclusive-event
fraction per transmitted signal. Under nonideal conditions, losses,
detector imperfections, and adversarial disturbance reduce the rate
through the observed-error and privacy terms.

Equation~\eqref{eq:key_rate_interpretation} should be interpreted as a
conservative rate estimate associated with the effective Loop-Back
channel observed by Alice. This channel characterizes the input--output
behavior after the composed transformation
\(R_{\mathrm{eff}}=R_2R_1\) has been applied, while the internal
sequence of passive transformations remains physically distributed
between \(B_1\) and \(B_2\). Consequently, the external channel seen by
Alice must be distinguished from the intermediate-state structure that
generates it.

This distinction is essential for passive-user Loop-Back QKD. The protocol relies on the fact that the rotations \(R(\pm\pi/8)\) do not produce perfectly distinguishable intermediate states for the BB84 input ensemble. Hence, an adversary attempting to identify an individual passive-user operation cannot do so through a single disturbance-free measurement compatible with all possible preparations. Any information gain about the local transformation choices is therefore constrained by measurement incompatibility and is expected to induce observable perturbations in Alice's statistics.

To illustrate why the choice \(R(\pm\pi/8)\) is essential, consider the counterexample of a deterministic variant based on larger rotations. If the passive users apply \(R(\pm\pi/2)\), then for an input such as \(\ket{0_Z}\) the intermediate states after the first passive user may become
\begin{equation}
    R\left(+\frac{\pi}{2}\right)\ket{0_Z}
    =
    \ket{0_X},
    \qquad
    R\left(-\frac{\pi}{2}\right)\ket{0_Z}
    =
    \ket{1_X}.
\end{equation}
These states are orthogonal and belong to the same measurement basis, allowing an adversary with access to the intermediate channel to distinguish the local operation without necessarily producing detectable disturbance at that stage. This illustrates why the reduced conclusive-event probability of the \(\pm\pi/8\) construction is not merely an efficiency limitation, but is linked to the use of non-orthogonal intermediate states.

Accordingly, the effective Pauli-channel model should be understood as an operational representation rather than as a replacement for a composable security proof. A complete security analysis would require a quantitative bound relating Eve's accessible information about the passive users' local choices to deviations in Alice's observed statistics, including finite-size effects, losses, detector imperfections, and coherent attacks.

\section{Conclusion}

This letter introduced an effective channel model for distributed
passive-user Loop-Back QKD. We showed that two optically passive users
applying local polarization rotations can be externally encapsulated as
a single effective Loop-Back node. From the active station's viewpoint,
the resulting operation is a balanced mixture of dephasings in
conjugate bases and is equivalently represented as an anisotropic Pauli
channel with identity, \(X\), and \(Z\) components and no
\(Y\)-component.

The representation recovers the ideal conclusive-event probability
\(P_{\mathrm{conc}}=1/4\) and shows that the induced disturbance is not
isotropic depolarization, but an intrinsic channel generated by the
passive composition mechanism. In this setting, Alice acts as an active
quantum infrastructure that identifies and announces the conclusive
rounds, while the raw secret bit remains determined by the private
local choices of \(B_1\) and \(B_2\). The model also separates the
externally observed channel from the intermediate-state structure that
generates it, highlighting the need for non-perfectly distinguishable
local transformations in secure passive-user implementations. This
provides a compact basis for subsequent channel-based QKD analysis
under realistic optical imperfections and adversarial models.
```

\ifCLASSOPTIONcaptionsoff
  \newpage
\fi

\bibliographystyle{IEEEtran}
\bibliography{references}

\end{document}